\newcommand{\rurl}[1]{\href{https://#1/}{\nolinkurl{#1}}}
\def\titl{Sparse polynomial interpolation and division in soft-linear
time}
\def\auts{Pascal Giorgi, Bruno Grenet, Armelle Perret du Cray, Daniel S. Roche}
\newcommand{\doi}[1]{doi: \href{http://dx.doi.org/#1}{\path{#1}}}
\newtheorem{theorem}{Theorem}[section]
\newtheorem{fact}[theorem]{Fact}
\newtheorem{cor}[theorem]{Corollary}
\crefname{fact}{Fact}{Facts}
\numberwithin{equation}{section}
\newcommand{\ZZ}{\ensuremath{\mathbb{Z}}}
\newcommand{\GF}[1]{\ensuremath{\mathbb{F}_{#1}}}
\newcommand{\bnd}[2]{\ensuremath{#1\mathopen{}\left(#2\right)\mathclose{}}}
\newcommand{\oh}[1]{\bnd{O}{#1}}
\newcommand{\softoh}[1]{\bnd{\widetilde{O}}{#1}}
\newcommand{\poly}{\ensuremath{\mathsf{poly}}}
\newcommand{\polylog}{\ensuremath{\mathsf{polylog}}}
\newcommand{\fail}{\textsc{fail}\xspace}
\renewcommand{\backref}[1]{Referenced on %
  \expandarg\StrCount{#1}{,}[\ncommas]%
  \ifthenelse{\ncommas = 0}{page #1}%
  {pages \StrBefore[\ncommas]{#1}{,}\ and\StrBehind[\ncommas]{#1}{,}}%
.}
\renewcommand{\titl}{Random primes in arithmetic progressions}
\definecolor{commentgreen}{RGB}{2,112,10}
\definecolor{eminence}{RGB}{108,48,130}
\definecolor{weborange}{RGB}{255,165,0}
\definecolor{frenchplum}{RGB}{129,20,83}
\title{\titl}
\author{
Pascal Giorgi\\
{\small LIRMM, Univ. Montpellier, CNRS}\\
{\small\rurl{www.lirmm.fr/~giorgi}}
\and
Bruno Grenet\\
{\small LIRMM, Univ. Montpellier, CNRS}\\
{\small\rurl{www.lirmm.fr/~grenet/}}
\and
Armelle Perret du Cray\\
{\small LIRMM, Univ. Montpellier, CNRS} \\
{\small\rurl{www.lirmm.fr/armelle-perret-du-cray}}
\and
Daniel S.\ Roche\\
{\small United States Naval Academy}\\
{\small\rurl{www.usna.edu/cs/roche}}
}
\begin{document}
\maketitle

\begin{abstract}
  We describe a straightforward method to generate a random prime $q$
  such that the multiplicative group $\GF{q}^*$ also has a random large
  prime-order subgroup. The described algorithm also yields this order
  $p$ as well as a $p$'th primitive root of unity $\omega$.
  The methods here are efficient asymptotically, but due to large
  constants may not be very useful in practical settings.
\end{abstract}

\section{Introduction}

In various contexts, for example in sparse polynomial evaluation and
interpolation algorithms, it is necessary to have a finite field
$\GF{q}$ that admits an order-$p$ multiplicative subgroup with generator
$\omega$. There are typically some non-divisibility properties both on
the field size $q$ and the subgroup order $p$.

In this note, we briefly sketch efficient algorithms to
probabilistically generate such
$q,p,\omega$ tuples. The results are neither surprising to practitioners
in this area, nor are they particularly original. However, we have found
them useful, and so decided to publish in this short note with complete
proofs.

\section{Statement of results}

Our approach to produce triples $(p,q,\omega)$ such that $\omega$ generates an order-$p$
multiplicative subgroup of $\GF{q}^\times$ is straightforward. We first sample integers $p$
until a prime number is found. Then, we need to find a prime number $q$ such that $p\divides(q-1)$, 
that is $q$ is in the arithmetic progression $\{ap+1:a\ge 1\}$, and such that $q=\poly(p)$. Again,
we sample integers $a$ until $ap+1$ is prime. Finally, we sample elements $\zeta\in\GF{q}^\times$
until $\zeta^{(q-1)/p}\neq 1$ and return $(p,q,\zeta^{(q-1)/p})$.

Each step is justified by the abondance of \emph{good} integers or elements of $\GF{q}^\times$.
The computation of $p$ relies on an effective version of the prime number theorem of \citet{RS62}. The
computation of $q$ relies on effective versions of Dirichlet's theorem (or more precisely 
Bombieri-Vinogradov theorem) of \citet{AH15} and \citet{Sed18}. The computation of $\omega$
relies on the fact that there are at most $(q-1)/p$ values $\zeta$ such that $\zeta^{(q-1)/p} = 1$.

Our approach is closely related to similar results in \citeauthor{arn16}'s Ph.D.
thesis \citeyearpar{arn16}. We replace a constant probability of success by an
arbitrary high probability of success, and we use better bounds.

\begin{theorem}\label{lem:mcpap}
  There exists an explicit Monte Carlo algorithm which, given a bound
  $\lambda\ge\frac{2^{58}}{\epsilon^2}$, produces a triple
  $(p,q,\omega)$ that has the following properties with probability at least
  $1-\epsilon$, and return \fail otherwise:
  \begin{itemize}
  \item $p$ is uniformly distributed amongst the primes of $(\lambda,2\lambda)$;
  \item $q\le\lambda^6$ is a prime such that $p\divides(q-1)$;
  \item $\omega$ is a $p$-primitive root of unity in $\GF q$;
  \end{itemize}
  Its worst-case bit complexity is $\polylog(\lambda)$.
\end{theorem}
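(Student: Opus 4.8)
The plan is to realize the three-step sampling procedure sketched above as three successive rejection-sampling loops, each carrying an a priori bound on its iteration count; the algorithm returns \fail the moment any loop exhausts its budget, which makes the worst-case running time automatic. I would allocate a failure probability of $\epsilon/3$ to each phase and show that, run for $O(\log\lambda\cdot\log(1/\epsilon))$ iterations, each phase fails with probability at most $\epsilon/3$ while each iteration costs $\polylog(\lambda)$ bit operations (sampling an integer from a given interval, a deterministic primality test such as AKS, and at most one modular exponentiation). Since the hypothesis $\lambda\ge 2^{58}/\epsilon^2$ gives $\log(1/\epsilon)\le\tfrac12\log\lambda$, the iteration count is $O(\log^2\lambda)$, so the total cost is $\polylog(\lambda)$; a union bound over the three phases then gives overall success probability at least $1-\epsilon$.

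For the first phase I would sample integers uniformly from $(\lambda,2\lambda)$ and keep the first prime. By the effective prime number theorem of \citet{RS62} --- applicable since the hypothesis places $\lambda$ far past the relevant threshold --- the number of primes in $(\lambda,2\lambda)$ is at least $c\lambda/\log\lambda$ for an absolute constant $c>0$, so each sample is prime with probability at least $c/\log\lambda$ and $O(\log\lambda\cdot\log(1/\epsilon))$ trials suffice to reduce the failure probability to $\epsilon/3$. As the samples are independent and uniform and all primes of the interval play symmetric roles, the first prime found is uniformly distributed among them; since the two later phases will be seen to succeed with probability at least $1-2\epsilon/3$ for every admissible $p$, this yields the first bullet.

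The second phase is the heart of the argument. Here I would sample $a$ uniformly from $\{1,\dots,\lfloor(\lambda^6-1)/p\rfloor\}$ and keep the first $a$ for which $q:=ap+1$ is prime, so $q\le\lambda^6$ and $p\divides(q-1)$ hold by construction. There are $\Theta(\lambda^6/p)$ candidates, and by the explicit forms of Dirichlet's theorem and the Bombieri-Vinogradov theorem of \citet{AH15} and \citet{Sed18}, the number of primes $q\le\lambda^6$ with $q\equiv 1\pmod p$ is at least $\tfrac{1}{\phi(p)}\cdot\tfrac{c'\lambda^6}{\log(\lambda^6)}$ for an absolute constant $c'>0$; as $p$ is prime, $\phi(p)=p-1$, and as $p<2\lambda$ this count is of order $\lambda^6/(p\log\lambda)$, so a uniform candidate yields a prime $q$ with probability $\Omega(1/\log\lambda)$ (in particular the count is positive, so the phase is not doomed to fail). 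As before, $O(\log\lambda\cdot\log(1/\epsilon))$ trials bring the failure probability to $\epsilon/3$. The delicate point is quantitative: the error terms in the effective estimates of \citet{AH15,Sed18} overwhelm the main term unless the modulus is small relative to a fixed large power of the range, which forces the range to be a power of $\lambda$ --- hence the bound $q\le\lambda^6$ --- and $\lambda$ to exceed an explicit threshold. Verifying that $\lambda\ge 2^{58}/\epsilon^2$ meets the explicit inequalities of \citet{AH15,Sed18} simultaneously for all moduli $p<2\lambda$, with slack for the sampling, is the main bookkeeping computation and the part I expect to be most delicate.

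Finally, for $\omega$ I would sample $\zeta$ uniformly from $\GF{q}^\times$, put $\omega:=\zeta^{(q-1)/p}$, and reject if $\omega=1$. The nonzero polynomial $X^{(q-1)/p}-1$ has at most $(q-1)/p$ roots in the field $\GF{q}$, so $\zeta^{(q-1)/p}=1$ for at most $(q-1)/p$ of the $q-1$ admissible $\zeta$, and one trial fails with probability at most $\tfrac{(q-1)/p}{q-1}=1/p\le 1/\lambda$; a constant number of trials drives this below $\epsilon/3$. When $\omega\neq1$, the identity $\omega^p=\zeta^{q-1}=1$ shows that the order of $\omega$ divides the prime $p$ and is not $1$, hence equals $p$, so $\omega$ is a primitive $p$-th root of unity --- the third bullet. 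Each loop performs at most $\polylog(\lambda)$ iterations of cost $\polylog(\lambda)$ before possibly returning \fail, so the worst-case bit complexity is $\polylog(\lambda)$, which would complete the proof.
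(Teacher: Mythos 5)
Your overall architecture (three budgeted rejection-sampling loops returning \fail on budget exhaustion, a union bound over phases, and the observation that $\log(1/\epsilon)=O(\log\lambda)$ makes everything $\polylog(\lambda)$) matches the paper's, and your treatments of phases one and three are essentially the paper's. But there is a genuine gap in phase two, located precisely at the point you flag as ``the main bookkeeping computation.'' You propose to verify that the explicit Bombieri--Vinogradov inequalities of \citet{AH15} and \citet{Sed18} yield the lower bound $\pi(\lambda^6;p,1)\gtrsim \lambda^6/(\phi(p)\ln(\lambda^6))$ \emph{simultaneously for all moduli} $p<2\lambda$. No form of Bombieri--Vinogradov can deliver this: it is an averaged statement, bounding the \emph{sum} over moduli $m$ of the errors $\left|\pi(x;m,a)-\pi(x)/\phi(m)\right|$, and it is entirely consistent with a few individual moduli having errors as large as the main term (pointwise bounds for every modulus in this range are only known under GRH-type hypotheses). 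Consequently your parenthetical ``in particular the count is positive, so the phase is not doomed to fail'' is unjustified for an arbitrary admissible $p$, and so is the claim that the later phases succeed with probability at least $1-2\epsilon/3$ ``for every admissible $p$,'' on which your uniformity argument for the first bullet also leans.

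The correct route, which is the paper's key intermediate step (\cref{fact:arprog}), is to convert the averaged bound into a probabilistic one over the choice of $p$ via Markov's inequality: the sum over primes $p\in(\lambda,2\lambda)$ of the errors is at most roughly $10^6\lambda^5(\ln\lambda)^{7/2}$, so the number of \emph{bad} primes $p$ --- those with $\pi(\lambda^6;p,1)\le\lambda^5/(24\ln\lambda)$ --- is $O((\ln\lambda)^{9/2})$, and a \emph{uniformly random} prime of $(\lambda,2\lambda)$ is bad with probability $O(\lambda^{-1}(\ln\lambda)^{11/2})$, which drops below the allotted budget exactly when $\lambda\ge 2^{58}/\epsilon^2$ (this is where the constant in the hypothesis comes from). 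This introduces a fourth failure event, ``the sampled $p$ is bad,'' which is why the paper splits the error budget as $\epsilon/4$ four ways rather than $\epsilon/3$ three ways; it also shows that the uniform distribution of $p$ in the first bullet is load-bearing for the correctness of phase two, not merely a reported property. With this repair --- replacing your ``for all moduli'' verification by the bad-prime counting argument --- the rest of your proof goes through.
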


An additional requirement in some situations is that the prime $q$ does not
divide an (unknown!) large integer. This is achieved by taking $\lambda$
sufficiently large.

\begin{theorem}\label{lem:mcpap2}
  Let $K$ be an unknown integer, and let $(p,q,\omega)$ a triple produced by
  the algorithm of \cref{lem:mcpap} on some input $\lambda$.  If
  $\lambda\ge\max(\frac{2^{56}}{\mu^2}, \sqrt[5]{\frac{48}{\mu}\ln K})$, the
  probability that $q$ divides $K$ is at most $\mu$.
\end{theorem}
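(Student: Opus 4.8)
The plan is to combine two elementary facts with the quantitative heart of \cref{lem:mcpap}: first, $q$ lies in the narrow window $(\lambda,\lambda^6]$, so $K$ has at most $\ln K/\ln\lambda$ prime factors that $q$ could equal; second, conditioned on the prime $p$ chosen at the first stage of the algorithm, $q$ is essentially uniform over a set of about $\lambda^5/\ln\lambda$ primes, so it is unlikely to land on any one of those few prime factors of $K$. To begin, \cref{lem:mcpap} guarantees $p\divides(q-1)$ with $p>\lambda$, hence $q>\lambda$, and $q\le\lambda^6$; and on the outcome where the algorithm returns \fail there is no $q$ at all, so we may work conditioned on a triple being produced. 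Thus $q\divides K$ forces $q$ to be one of the at most $\ln K/\ln\lambda$ prime factors of $K$ lying in $(\lambda,\lambda^6]$.

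Next I would condition on the prime $p_0$ produced at the first stage of the algorithm of \cref{lem:mcpap}. Inspecting that algorithm: the first stage samples integers uniformly from $(\lambda,2\lambda)$ and retries until one is prime, so $p_0$ is (essentially) uniform over the primes of $(\lambda,2\lambda)$; given $p_0$, the value $q$ is then formed by sampling $a$ uniformly from $\{1,\dots,A\}$ --- with $A=\Theta(\lambda^5)$ the bound chosen so that $ap+1\le\lambda^6$ --- and retrying until $ap_0+1$ is prime, so conditioned on $p_0$ (and on this stage's success, which only helps) $q$ is uniform over the set $\Pi(p_0)$ of primes of the form $ap_0+1$ with $1\le a\le A$. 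Since at most $\ln K/\ln\lambda$ members of $\Pi(p_0)$ divide $K$ (each exceeds $\lambda$),
\[
  \Pr[\, q\divides K \mid p=p_0 \,] \;\le\; \frac{\ln K}{\ln\lambda}\cdot\frac{1}{\#\Pi(p_0)}.
\]

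It remains to bound $\#\Pi(p_0)$ from below, which is exactly the estimate the proof of \cref{lem:mcpap} already needs in order that the retry loop on $a$ succeed with good probability: by the effective Bombieri--Vinogradov estimate of \citet{AH15} and \citet{Sed18}, every prime $p_0\in(\lambda,2\lambda)$ outside a set of tiny relative density satisfies $\#\Pi(p_0)\ge c\,A/\ln(Ap_0)\ge c'\lambda^5/\ln\lambda$ for an absolute constant $c'>0$. For the remaining exceptional $p_0$ I would use only $\Pr[q\divides K\mid p=p_0]\le1$; it is the hypothesis $\lambda\ge 2^{56}/\mu^2$ --- the analogue of the $\lambda\ge 2^{58}/\epsilon^2$ hypothesis of \cref{lem:mcpap} --- that forces their combined contribution below $\mu/2$ (these moduli are moreover down-weighted, since a $p_0$ admitting few valid $a$ rarely survives the retry loop).

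Averaging the displayed inequality over $p_0$ (the weights $\Pr[p=p_0]$ sum to at most $1$) then gives $\Pr[q\divides K]\le \ln K/(c'\lambda^5)+\mu/2$; substituting $\lambda\ge\sqrt[5]{(48/\mu)\ln K}$, i.e.\ $\ln K\le\mu\lambda^5/48$, yields $\Pr[q\divides K]\le\mu$, provided the absolute constant $c'$ is at least $1/24$. I do not expect any single step to be the obstacle; the work is in the bookkeeping --- reusing (rather than reproving) the parts of \cref{lem:mcpap} that supply both the near-uniformity of $q$ on $\Pi(p)$ and the lower bound on $\#\Pi(p)$, and tracking constants so that the clean thresholds $2^{56}/\mu^2$ and $(48/\mu)\ln K$ emerge. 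The one genuinely delicate point is that, the modulus $p$ being itself random, $\#\Pi(p)$ is only under control for almost all $p$, so the exceptional moduli must be charged against the error budget rather than bounded directly.
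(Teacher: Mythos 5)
Your proposal is correct and follows essentially the same route as the paper: split the failure probability into the event that $p$ is a ``bad'' prime with fewer than $\lambda^5/(24\ln\lambda)$ primes $q\le\lambda^6$ in its progression (bounded by $\mu/2$ via \cref{fact:arprog} with $\epsilon=\mu/2$, which is exactly where $\lambda\ge 2^{56}/\mu^2$ comes from), and the event that a good $p$'s uniformly chosen $q$ hits one of the at most $\ln K/\ln\lambda$ prime factors of $K$ exceeding $\lambda$ (bounded by $24\ln K/\lambda^5\le\mu/2$ using $\lambda\ge\sqrt[5]{(48/\mu)\ln K}$). Your constant $c'=1/24$ and both thresholds match the paper's computation exactly.
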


The large constant in the statement of \cref{lem:mcpap} is required in order
to get rigorous unconditional complexity bounds. Yet, it makes the algorithm
not very practical because of the bit-length of the primes produced. 
\Cref{sec:experiments} presents experimental results indicating the results
actually hold for much smaller values of $\lambda$.

\section{Proofs}

To construct a field $\GF q$ with a $p$-PRU $\omega$, we first need to generate random prime numbers. The well-known technique for this is to sample random integers and test them for primality. In order to get Las Vegas algorithm, we rely on the celebrated AKS algorithm.

\begin{fact}[\citet{AKS04}]\label{fact:primetest}
There is a deterministic algorithm that, given any integer $n$,
determines whether $n$ is prime or composite and has bit complexity
$\polylog(n)$.
\end{fact}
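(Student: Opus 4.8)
The plan is to recall the deterministic test of \citet{AKS04} and verify its
  correctness and its $\polylog(n)$ bit complexity; the argument is theirs.
  On input $n$, the algorithm (i) reports \textsc{composite} if $n=a^b$ for
  integers $a\ge 2$, $b\ge 2$ (checked in $\polylog(n)$ time by trying each
  $b\le\log_2 n$ and binary-searching for $a$); (ii) finds the least $r$ for
  which $n$ has multiplicative order greater than $\log_2^2 n$ modulo $r$;
  (iii) reports \textsc{composite} if some $a\le r$ has $1<\gcd(a,n)<n$;
  (iv) reports \textsc{prime} if $n\le r$; and otherwise (v) tests, for each
  $a$ from $1$ to $\ell:=\floor{\sqrt{\varphi(r)}\,\log_2 n}$, whether
  $(X+a)^n\equiv X^n+a\pmod{X^r-1,\;n}$, reporting \textsc{composite} at the
  first failure and \textsc{prime} otherwise.

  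One direction is immediate: a prime $n$ clears (i) and (iii), and the
  Frobenius identity $(X+a)^n\equiv X^n+a$ in $\ZZ[X]$ makes test (v) succeed
  as well, so it is reported prime. The converse --- that clearing every test
  forces $n$ prime --- is the crux and the main obstacle. Since $n$ got past
  (iii) and (iv), it has a prime factor $p>r$; fix one, together with an
  irreducible factor $h$ of the $r$-th cyclotomic polynomial over $\GF{p}$, and
  consider the group $G$ generated inside $(\GF{p}[X]/(h))^\times$ by the images
  of the linear polynomials $X+1,\dots,X+\ell$. Call an integer $m$
  \emph{introspective} when $f(X^m)\equiv f(X)^m\pmod{X^r-1,\;p}$ for each
  generator $f$; one checks that the introspective integers form a
  multiplicative monoid containing both $n$ (this is precisely test (v) read
  modulo $p$) and $p$ (Frobenius). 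Reducing them modulo $r$ into the subgroup
  of $(\ZZ/r\ZZ)^\times$ they generate, of some order $t>\log_2^2 n$, one
  obtains on one side a lower bound $\abs{G}\ge\binom{t+\ell}{t-1}$ by counting
  distinct bounded-degree products of the $X+a$, and on the other, \emph{if}
  $n$ were not a power of $p$, an upper bound of the form $\abs{G}\le
  n^{\sqrt{t}}$ by a pigeonhole argument on the introspective integers
  $n^ip^j$ with $0\le i,j\le\sqrt{t}$. For the chosen $\ell$ these are
  incompatible, so $n=p^k$, and then (i) forces $k=1$.

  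For the complexity, the one genuinely number-theoretic ingredient is that the
  least valid $r$ is $\oh{\log^5 n}$: if no $r\le\ceil{\log_2^5 n}$ worked, each
  such $r$ would divide $n\prod_{1\le i\le\log_2^2 n}(n^i-1)<2^{\ceil{\log_2^5
  n}}$, hence so would their least common multiple, contradicting the classical
  estimate $\operatorname{lcm}(1,\dots,m)\ge 2^{m}$ for $m\ge 7$. Given this,
  step (ii) searches over $\polylog(n)$ candidates, each order found in
  $\polylog(n)$ bit operations; step (v) runs $\ell\le\sqrt{r}\,\log_2
  n=\polylog(n)$ iterations, and each evaluates $(X+a)^n\bmod(X^r-1,n)$ by
  repeated squaring using $\oh{\log n}$ products of polynomials of degree $<r$
  over $\ZZ/n\ZZ$, i.e.\ $\softoh{r\log^2 n}$ bit operations, the right-hand
  side $X^{n\bmod r}+a$ being free; steps (i), (iii), (iv) are clearly
  dominated. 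Summing gives worst-case bit complexity $\polylog(n)$. The
  technical heart, as in \citet{AKS04}, is the two-sided estimate on $\abs{G}$
  above.
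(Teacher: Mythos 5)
The paper gives no proof of this Fact---it is imported verbatim from the cited reference---and your sketch is a faithful, correct condensation of exactly that source's argument (perfect-power test, the order condition on $r$ with the $\operatorname{lcm}$ estimate giving $r=\oh{\log^5 n}$, the introspective-number/group-size squeeze, and the running-time tally, which indeed recovers the $\softoh{\log^{10.5}n}$ bound the paper mentions before citing the Lenstra--Pomerance improvement). Nothing further is needed.
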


While the original bit complexity was $\softoh{\log^{10.5} n}$, this has
been subsequently improved to $\softoh{\log^6 n}$ in a revised version
by \citet{LP11}. In practice, a better option is to use the Monte Carlo
Miller-Rabin primality test which has a worst-case bit complexity of
$\softoh{\log^2 n}$ but a low probability of incorrectly reporting that
a composite number is prime \citep{Rab80}.

No fast deterministic algorithm is known to \emph{construct} a prime number with a given bit length $b$. However, sampling random $b$-bit integers and testing their primality using AKS algorithm
results in a \emph{Las Vegas} randomized algorithm. The expected running time relies on the fact that there are at least $\Omega(2^b/b)$ primes with $b$ bits. We recall some more precise bounds. 

\begin{fact}[\citet{RS62}]
For $\lambda\ge 21$, there exist at least $\frac{3}{5}\lambda\ln\lambda$ prime numbers between $\lambda$ and $2\lambda$.
\end{fact}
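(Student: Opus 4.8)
The plan is to derive the bound from Rosser and Schoenfeld's explicit form of the prime number theorem. Write $N(\lambda)$ for the number of primes in the interval $(\lambda,2\lambda)$ (the quantity the fact bounds from below by $\tfrac{3}{5}\lambda/\ln\lambda$). Up to an additive $O(1)$ — coming from whether the endpoints themselves are prime, and from allowing non-integer $\lambda$ — one has $N(\lambda)=\pi(2\lambda)-\pi(\lambda)$, so it suffices to bound this difference. Chebyshev-type estimates already give $\pi(2\lambda)-\pi(\lambda)=\Theta(\lambda/\ln\lambda)$, but they do not control the constant, which is why an explicit version is needed.

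Concretely, I would use the inequalities of \citet{RS62}: $\pi(x)<\tfrac{x}{\ln x}\bigl(1+\tfrac{3}{2\ln x}\bigr)$ for every $x>1$, and $\pi(x)>\tfrac{x}{\ln x}\bigl(1+\tfrac{1}{2\ln x}\bigr)$ for $x\ge 59$ (below that the cruder $\tfrac{x}{\ln x}<\pi(x)$, valid for $x\ge 17$, suffices). Applying the lower bound at $2\lambda$ and the upper bound at $\lambda$,
\[
  \pi(2\lambda)-\pi(\lambda)\;>\;\frac{2\lambda}{\ln(2\lambda)}\Bigl(1+\frac{1}{2\ln(2\lambda)}\Bigr)-\frac{\lambda}{\ln\lambda}\Bigl(1+\frac{3}{2\ln\lambda}\Bigr).
\]
Dividing by $\lambda/\ln\lambda$ and substituting $t=\ln\lambda$ (so $\ln(2\lambda)=t+\ln 2$) turns the target inequality $\pi(2\lambda)-\pi(\lambda)>\tfrac{3}{5}\lambda/\ln\lambda$ into a single inequality in the real variable $t$ whose left-hand side tends to $2$ as $t\to\infty$; hence it holds for all $t\ge t_0$, i.e.\ for all $\lambda\ge\lambda_0:=e^{t_0}$, where $\lambda_0$ is a small explicit constant. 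For the finitely many integers $\lambda$ with $21\le\lambda<\lambda_0$ the claim reduces to a finite computation: list the primes up to $2\lambda_0$ and check $\pi(2\lambda)-\pi(\lambda)\ge\tfrac{3}{5}\lambda/\ln\lambda$ at each relevant $\lambda$ (it is enough to test the values of $\lambda$ just before each jump of $\lambda\mapsto\pi(2\lambda)-\pi(\lambda)$, since the right-hand side is increasing).

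The only genuine difficulty is quantitative bookkeeping: the explicit $\pi$-bounds must be chosen sharp enough that $\lambda_0$ stays small — the one-line bound $\pi(x)<1.25506\,x/\ln x$ alone would push $\lambda_0$ into the thousands — and one must respect that the sharp lower bound for $\pi$ is only available for $x\ge 59$, which forces the separate treatment of $21\le\lambda<59$. One could equally route the argument through Chebyshev's function $\vartheta$, using $N(\lambda)\ge(\vartheta(2\lambda)-\vartheta(\lambda))/\ln(2\lambda)$ together with explicit upper and lower bounds for $\vartheta$ (also due to \citet{RS62}); the resulting scalar inequality is a little cleaner, and again only a bounded check for small $\lambda$ remains.
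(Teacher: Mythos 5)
The paper offers no proof of this statement: it is imported verbatim as a black-box citation (it appears, essentially word for word, as a corollary in \citet{RS62}, which is where the constants $3/5$ and $21$ come from). So there is nothing in the paper to compare against; your proposal is a reconstruction of how Rosser and Schoenfeld themselves derive it, namely from their explicit two-sided estimates on $\pi(x)$, and that reconstruction is sound. Two small remarks. First, after dividing by $\lambda/\ln\lambda$ the expression $\pi(2\lambda)-\pi(\lambda)$ tends to $2-1=1$, not to $2$; this does not hurt you since $1>\tfrac35$, and your numerical threshold $\lambda_0$ (around $150$ with the bounds you quote) is unaffected, but the stated limit is off. Second, note that the Fact as printed in the paper reads $\tfrac{3}{5}\lambda\ln\lambda$, which is a typo for $\tfrac{3}{5}\lambda/\ln\lambda$ --- the form you correctly target, and the form actually used later in the proofs of \cref{fact:arprog} and \cref{lem:mcpap}. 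The only point at which your write-up is looser than necessary is the ``additive $O(1)$'' for endpoint effects: if you absorb that constant into the analytic inequality, the threshold $\lambda_0$ moves slightly, so the finite verification range should be fixed only after that bookkeeping is done; with $\lambda$ restricted to integers the count in $(\lambda,2\lambda)$ is exactly $\pi(2\lambda)-\pi(\lambda)$ and the issue disappears.
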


Once we have a prime number $p$, we want to find a prime number $q$ in the
arithmetic progression $p+1$, $2p+1$, $3p+1$, \dots{} 
Dirichlet's theorem says that, \emph{asymptotically}, the distribution of
primes in this arithmetic progression is the
same as the distribution of primes in $\ZZ$. The Bombieri-Vinogradov theorem 
refines it with bounds on the error terms. This indicates that a good
strategy to generate $q$ is simply to pick a random (even) positive integer $k$
and test whether $pk + 1$ is prime, repeating until a prime of that form
is found.

The question is, how large should $k$ be in the strategy above in order to
guarantee a reasonable chance of success? Recent results of \citet{AH15} 
and \citet{Sed18} give explicit bounds for the Bombieri-Vinogradov theorem.

\begin{fact}[{\citet[Corollary~1.5]{Sed18}}]\label{fact:explicitBV}
Let $\pi(x)$ denote the number of prime numbers $\le x$, $\pi(x;m,a)$ the number
of prime numbers $\le x$ that are congruent to $a$ modulo $m$, and $\ell(x)$ the 
smaller prime divisor of $x$. Then for any $\gamma\ge 4$ and $\lambda_1\le\lambda_2\le\gamma^{1/2}$,
\begin{multline*}
\sum_{\substack{m\le\lambda_2\\\ell(m)>\lambda_1}}
    \max_{2\le y\le\gamma} \max_{a:\gcd(a,m)=1}
        \left| \pi(y;m,a)-\frac{\pi(y)}{\phi(m)}\right| \\
  \le 122.77\left(14\frac{\gamma}{\lambda_1}+4\gamma^{1/2}\lambda_2+15\gamma^{2/3}\lambda_2^{1/2}+4\gamma^{5/         6}\ln(\frac{\lambda_2}{\lambda_1})\right)(\ln\gamma)^{7/2}.
\end{multline*}
\end{fact}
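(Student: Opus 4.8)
The plan is to follow the classical route to the Bombieri--Vinogradov theorem while carrying explicit constants through every step; this is precisely the program executed by \citet{Sed18} on the foundations laid by \citet{AH15}. First I would pass from $\pi$ to the Chebyshev function $\psi(y,\chi)=\sum_{n\le y}\Lambda(n)\chi(n)$, where $\Lambda$ is the von Mangoldt function and $\chi$ runs over Dirichlet characters. Partial summation converts a bound on $\max_y|\psi(y;m,a)-\psi(y)/\phi(m)|$ into one on the corresponding quantity for $\pi$, at the cost of explicit but harmless factors of $\ln\gamma$ together with a $\gamma^{1/2}$-sized term absorbing the contribution of prime powers. By orthogonality of characters,
\[
  \psi(y;m,a)-\frac{\psi(y)}{\phi(m)}
  = \frac{1}{\phi(m)}\sum_{\chi\neq\chi_0}\bar\chi(a)\,\psi(y,\chi),
\]
so taking absolute values and maximizing over $a$ reduces the whole statement to bounding $\sum_{m}\frac{1}{\phi(m)}\sum_{\chi\bmod m}\max_y|\psi(y,\chi)|$ over the restricted set of moduli.

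Next I would reduce to primitive characters: every $\chi\bmod m$ is induced by a unique primitive $\chi^\ast\bmod d$ with $d\mid m$, and $\psi(y,\chi)$ differs from $\psi(y,\chi^\ast)$ only by a contribution supported on prime powers dividing $m$, again explicitly bounded. The restriction $\ell(m)>\lambda_1$ is the crucial feature of this variant: it forces every modulus, and hence every inducing conductor $d>1$, to be a product of primes exceeding $\lambda_1$. This is exactly what produces the favourable $\gamma/\lambda_1$ term and permits pushing the modulus range up to $\gamma^{1/2}$. After rearranging the sum over $m$ with $d$ fixed, I am left with a weighted sum over primitive characters of $\max_y|\psi(y,\chi^\ast)|$.

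The analytic heart is a Vaughan-type identity, which writes $\Lambda(n)$ as a bounded number of linear (Type~I) and bilinear (Type~II) pieces with cutoff parameters $U,V$ chosen as powers of $\gamma$. Each piece yields a character sum of the shape $\sum_n a_n\chi(n)$ or $\sum_{j,k}b_j c_k\chi(jk)$, to which I would apply the multiplicative large sieve inequality in its sharp explicit form,
\[
  \sum_{d\le D}\frac{d}{\phi(d)}\ \sideset{}{^\ast}\sum_{\chi\bmod d}
  \Bigl|\sum_{n\le N}a_n\chi(n)\Bigr|^2
  \le (D^2+N)\sum_{n\le N}|a_n|^2 .
\]
To accommodate the inner $\max_y$, I would use the standard device of bounding $\max_y|\psi(y,\chi)|$ by its value at finitely many checkpoints plus the $\Lambda$-mass of the intervening short intervals, or equivalently a Gallagher-type mean-value estimate; after one or two applications of Cauchy--Schwarz this replaces the maximum by a mean square at an explicit cost of a power of $\ln\gamma$, and accumulating the half-integer powers contributed by the several pieces is what yields the exponent $7/2$.

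The main obstacle, and indeed the entire point of the exercise, is constant-tracking rather than any new idea: one must optimize the Vaughan parameters $U,V$ and the splitting of the modulus range so that the dominant contributions balance as $14\gamma/\lambda_1$, $4\gamma^{1/2}\lambda_2$, $15\gamma^{2/3}\lambda_2^{1/2}$, and $4\gamma^{5/6}\ln(\lambda_2/\lambda_1)$, while keeping the accumulated numerical factor small enough to reach $122.77$. In particular the Cauchy--Schwarz steps, the transfer between $\psi$ and $\pi$, and the reduction to primitive characters each introduce explicit multiplicative losses that must be bounded \emph{uniformly} for all $\gamma\ge 4$ and $\lambda_1\le\lambda_2\le\gamma^{1/2}$; verifying these uniform inequalities, as opposed to mere asymptotics, is the delicate part, and is precisely where the hypotheses $\gamma\ge 4$ and $\lambda_2\le\gamma^{1/2}$ are consumed.
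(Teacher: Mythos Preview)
The paper does not prove this statement at all: it is quoted verbatim as a \emph{Fact} from \citet[Corollary~1.5]{Sed18} and used as a black box in the proof of \cref{fact:arprog}. So there is no ``paper's own proof'' to compare your proposal against.

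That said, your sketch is a faithful outline of how Sedunova (building on \citet{AH15}) actually establishes the result: the reduction from $\pi$ to $\psi$ by partial summation, orthogonality of characters, the passage to primitive characters exploiting the condition $\ell(m)>\lambda_1$, Vaughan's identity, and the explicit multiplicative large sieve, with the $(\ln\gamma)^{7/2}$ arising from accumulated Cauchy--Schwarz and mean-value losses. You have correctly identified that the substance of the work is constant-tracking rather than any new analytic input. If anything, your proposal is more ambitious than what the present paper requires, since the authors are content to invoke the inequality and only need the shape of the right-hand side, not its derivation.
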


From this fact, we obtain the following probabilistic result on the number of primes in an arithmetic
progression. 

\begin{cor}\label{fact:arprog}
Let $0<\epsilon<\frac{1}{2}$ and $\lambda\ge\frac{2^{54}}{\epsilon^2}$, and $p$ be a random prime from $(\lambda, 2\lambda)$. Then with probability at least $1-\epsilon$, the number of prime numbers $q\le\lambda^6$ of the form $q = ap+1$ is $\ge \lambda^5/(24\ln\lambda)$.
\end{cor}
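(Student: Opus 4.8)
The plan is to invoke \cref{fact:explicitBV} with $\gamma=\lambda^6$, $\lambda_1=\lambda$ and $\lambda_2=2\lambda$; these satisfy the hypotheses $\gamma\ge4$ and $\lambda_1\le\lambda_2\le\gamma^{1/2}=\lambda^3$ for all $\lambda$ in our range. With these choices $\gamma/\lambda_1=\lambda^5$, $\gamma^{1/2}\lambda_2=2\lambda^4$, $\gamma^{2/3}\lambda_2^{1/2}=\sqrt2\,\lambda^{9/2}$, $\gamma^{5/6}=\lambda^5$ and $\ln(\lambda_2/\lambda_1)=\ln2$, so \cref{fact:explicitBV} yields $\sum_{\substack{m\le 2\lambda\\\ell(m)>\lambda}}\delta_m\le S$, where $\delta_m:=\max_{2\le y\le\lambda^6}\max_{a:\gcd(a,m)=1}\abs{\pi(y;m,a)-\pi(y)/\phi(m)}$ and
\[
  S \;:=\; 122.77\bigl(14\lambda^5 + 8\lambda^4 + 15\sqrt2\,\lambda^{9/2} + (4\ln2)\,\lambda^5\bigr)(6\ln\lambda)^{7/2}\;\le\; C\lambda^5(\ln\lambda)^{7/2}
\]
for an explicit constant $C$ once the lower-order powers of $\lambda$ are absorbed. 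Every prime $p\in(\lambda,2\lambda)$ has least prime factor $p>\lambda$ and satisfies $p<2\lambda$, hence appears among the moduli $m$ in this sum; since all terms are nonnegative, $\sum_p\delta_p\le S$, the sum running over primes $p$ of $(\lambda,2\lambda)$.

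Next I would bound the main term from below. For a prime $p\in(\lambda,2\lambda)$ we have $\phi(p)=p-1<2\lambda$, and the classical estimate $\pi(x)>x/\ln x$ (valid for $x\ge17$, \citealt{RS62}) applied at $x=\lambda^6$ gives $\pi(\lambda^6)/\phi(p)>\lambda^6/(6\ln\lambda)\cdot(2\lambda)^{-1}=\lambda^5/(12\ln\lambda)$. Call a prime $p\in(\lambda,2\lambda)$ \emph{bad} if $\delta_p>\lambda^5/(24\ln\lambda)$. Taking $y=\lambda^6$ and $a=1$ in the definition of $\delta_p$, a non-bad prime $p$ satisfies
\[
  \pi(\lambda^6;p,1)\;\ge\;\frac{\pi(\lambda^6)}{\phi(p)}-\delta_p\;>\;\frac{\lambda^5}{12\ln\lambda}-\frac{\lambda^5}{24\ln\lambda}\;=\;\frac{\lambda^5}{24\ln\lambda},
\]
and since every prime $q\le\lambda^6$ with $q\equiv1\pmod p$ is automatically of the form $ap+1$ with $a\ge1$, such a $p$ satisfies the conclusion of the corollary. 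It therefore suffices to bound the chance of drawing a bad prime.

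From $\sum_p\delta_p\le S$ and the definition of badness, the number $N$ of bad primes in $(\lambda,2\lambda)$ obeys $N\cdot\lambda^5/(24\ln\lambda)\le S$, i.e. $N\le 24S\ln\lambda/\lambda^5$. The number of primes in $(\lambda,2\lambda)$ is at least $\tfrac{3\lambda}{5\ln\lambda}$ by the quoted bound of \citet{RS62}, so a uniformly random prime $p\in(\lambda,2\lambda)$ is bad with probability at most
\[
  \frac{5\ln\lambda}{3\lambda}\cdot\frac{24S\ln\lambda}{\lambda^5}\;=\;\frac{40\,S\,(\ln\lambda)^2}{\lambda^6}\;\le\;\frac{40C(\ln\lambda)^{11/2}}{\lambda}.
\]
Finally, $\lambda\ge 2^{54}/\epsilon^2$ gives $\epsilon\ge 2^{27}/\sqrt\lambda$, so it is enough to check $40C(\ln\lambda)^{11/2}/\lambda\le 2^{27}/\sqrt\lambda$, that is $\sqrt\lambda\ge(40C/2^{27})(\ln\lambda)^{11/2}$; since $\epsilon<\tfrac12$ forces $\lambda>2^{56}$, this holds by evaluating both sides at $\lambda=2^{56}$ and observing that the left-hand side outgrows the right-hand side for $\lambda\ge 2^{56}$.

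The step I expect to be the main obstacle is this final numerical verification: the constant $2^{54}$ is essentially the smallest for which the chain of inequalities closes, so one must keep faithful track of every factor — the $122.77$ and $(6\ln\lambda)^{7/2}$ coming from \cref{fact:explicitBV}, the $3/5$ in the prime count, and the factor $2$ lost between $\lambda^5/(12\ln\lambda)$ and $\lambda^5/(24\ln\lambda)$ — and confirm that the lower-order terms of $S$ together with the surviving $(\ln\lambda)^{11/2}$ factor are genuinely dominated once $\lambda\ge 2^{56}$. Everything else is routine bookkeeping.
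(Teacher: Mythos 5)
Your proposal is correct and follows essentially the same route as the paper's proof: same instantiation of \cref{fact:explicitBV} ($\lambda_1=\lambda$, $\lambda_2=2\lambda$, $\gamma=\lambda^6$), same specialization to $y=\lambda^6$, $a=1$, same $\lambda^5/(12\ln\lambda)$ lower bound on the main term, same counting of bad primes against the $\tfrac35\lambda/\ln\lambda$ prime count, and the same final comparison of $2^{27}\lambda^{-1/2}$ against $40C(\ln\lambda)^{11/2}/\lambda$ (your definition of ``bad'' via the deviation $\delta_p$ rather than via the count $\pi(\lambda^6;p,1)$ is just the contrapositive of the paper's). The one loose end is the unspecified $C$: the computation you set up gives $C\approx 1.2\cdot 10^6$ for $\lambda\ge 2^8$ (so $40C=4.8\cdot 10^7$, matching the paper), and with that value your check at $\lambda=2^{56}$ indeed closes, $2^{28}\approx 2.7\cdot 10^8$ versus roughly $2.0\cdot 10^8$.
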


\begin{proof}
We apply \cref{fact:explicitBV} with $\lambda_1 = \lambda$, $\lambda_2 = 2\lambda$ and $\gamma = \lambda^6$. We note that the sum is over the prime numbers (since $\ell(m) > \lambda_1 \ge m/2$). We then simplify it by choosing $y=\gamma$ and $a = 1$ in the formula, which can only make the sum smaller. Then
\[\sum_{\substack{\lambda<p<2\lambda\\p\text{ prime}}}
    \left|\pi(\lambda^6;p,1)-\frac{\pi(\lambda^6)}{p-1}\right|
  \le 1.09\cdot 10^6(\lambda^5+1.27\lambda^{4.5}+0.48\lambda^4)(\ln\lambda)^{7/2}.
\]
For $\lambda\ge 2^8$, the sum is bounded by $1.2\cdot 10^6\lambda^5(\ln\lambda)^{7/2}$.
Now we wish to count the \emph{bad} primes in $(\lambda,2\lambda)$ such that $\pi(\lambda^6;p,1) \le \lambda^5/       (24\ln\lambda)$. Since $\pi(\lambda^6)\ge \lambda^6/(6\ln\lambda)$, if $p$ is a bad prime, then 
$\pi(\lambda^6)/(p-1) \ge \pi(\lambda^6;p,1)$ and since $p-1\le2\lambda$,
\[ \left|\pi(\lambda^6;p,1)-\frac{\pi(\lambda^6)}{p-1}\right| \ge \frac{\lambda^6/(6\ln\lambda)}{p-1}-                \frac{\lambda^5}{24\ln\lambda} \ge \frac{\lambda^5}{24\ln\lambda}.\]
If there are $k$ bad primes, then the sum is at least $k\lambda^5/24\ln\lambda$. Using the previous bound on the sum, we get the bound
\[k\le \frac{1.2\cdot 10^6\lambda^5(\ln\gamma)^{7/2}}{\lambda^5/(24\ln\lambda)} 
    = 2.88\cdot 10^7(\ln\lambda)^{9/2}.\]
Since there are at least $\frac{3}{5}\lambda/\ln\lambda$ prime numbers between $\lambda$ and $2\lambda$, the          probability that a random prime number $p$ chosen in $(\lambda,2\lambda)$ is bad is at most
\[\frac{2.88\cdot 10^7(\ln\lambda)^{9/2}}{\frac{3}{5}\lambda/\ln\lambda}
    = 4.8\cdot 10^7\lambda^{-1}(\ln\lambda)^{11/2}.\]
The probability obviously tends to zero when $\lambda$ tends to infinity. For instance,  for $\lambda\ge 2^{55}$ the  probability is bounded by $2^{27}\lambda^{-1/2}$. Hence, to get a probability at most $\epsilon$, one can take        $\lambda \ge \frac{2^{54}}{\epsilon^2}$ (which is $> 2^{55}$ as long as $\epsilon \le\frac{1}{2}$).
\end{proof} 

From this effective result, we deduce a Monte Carlo algorithm that produces primes $p$, $q$ such that
$p\divides(q-1)$, as well as a $p$-PRU modulo $q$.  

\begin{proof}[Proof of \cref{lem:mcpap}]
This is basically Algorithm ``GetPrimeAP-5/6'' on page 35 of \citep{arn16}, 
slightly adapted, where the primality tests are made using AKS algorithm:
\begin{enumerate}[font=\bfseries\tiny, label=\theenumi]
\item sample $\le\frac{5}{6}\ln\frac{4}{\epsilon}\ln\lambda$ random odd integers $p\in(\lambda,2\lambda)$ until $p$ is prime, 
return \fail if none of them is prime \label{step:primep}
\item sample $\le 12\ln\frac{4}\epsilon\ln\lambda$ random even integers $a\in[1,\lambda^5]$ until $q = ap+1$ is prime,
return \fail if none of them is prime \label{step:primeq}
\item sample $\le\log_p\frac{4}{\epsilon}$ random elements $\zeta\in\GF q^\times$ until $\omega = \zeta^{(q-1)/p}\neq 1$,
return \fail if $\omega=1$ for each $\zeta$ \label{step:omega}
\item \textbf{return} $(p,q,\omega)$
\end{enumerate}
Since AKS has complexity $\polylog{\lambda}$ and $\log\frac{1}{\epsilon} = \oh{\log\lambda}$, the complexity of the whole algorithm is $\polylog(\lambda)$.

There are at least $\frac35\lambda/\ln\lambda$ primes in $(\lambda,2\lambda)$, and $\lambda/2$ odd integers. Therefore, the probability that a random odd integer is prime is at least $6/(5\ln\lambda)$. The probability that no prime is produced after $k$ tries is at most $(1-6/(5\ln\lambda))^k\le e^{-6k/(5\ln\lambda)}$. If $k = \frac{5}{6}\ln\frac{4}{\epsilon}\ln\lambda$, the probability is at most $\frac{\epsilon}{4}$. Hence Step~\ref{step:primep} succeeds with probability at least $1-\frac{\epsilon}{4}$. 

Since $\lambda\ge\frac{2^{54}}{(\epsilon/4)^2}$, if the algorithm succeeds in producing $p$, there are at least $\lambda^5/(24\ln\lambda)$ prime numbers $q\le\lambda^6$ of the form $ap+1$ with probability at least $1-\frac{\epsilon}{4}$. 

If $p$ satisfies this condition, there are at least $\lambda^5/(24\ln\lambda)$ values of $a$ such that $ap+1$, amongst the $\frac{1}{2}\lambda^5$ possible values. With the same proof as above, the probability that such an $a$ be found in $\le 12\ln\frac{4}{\epsilon}\ln\lambda$ tries is at least $1-\frac{\epsilon}{4}$. 

Finally, if $q$ has been found, the third step finds a suitable $\omega$ with probability at least
$1-\frac{\epsilon}{4}$ since there are at most $\frac{q-1}{p}$ values of $\zeta$ such that $\zeta^{(q-1)/p} = 1$.  

Therefore, the algorithm returns a triple $(p,q,\omega)$ satisfying the three properties with
probability at least $1-\epsilon$. 
\end{proof}

\begin{proof}[Proof of \cref{lem:mcpap2}]
Since $\lambda\ge\frac{2^{54}}{(\mu/2)^2}$, the prime $p$, if produced, satisfies 
that there are at least $\lambda^5/(24\ln\lambda)$ primes $q\le\lambda^6$ of the form $ap+1$ with
probability at least $1-\frac{\mu}{2}$. The number of those primes than can divide $K$
is at most $\log_\lambda K$ since all of them are $>\lambda$. Therefore, the probability
that one of them chosen at random divides $K$ is at most 
$24\log_pK\ln\lambda/\lambda^5\le\frac{\mu}{2}$.
\end{proof}

\section{Experiments}\label{sec:experiments}

\Cref{lem:mcpap,lem:mcpap2} are only valid for large values of $\lambda$. This is only an artefact due to the known explicit constants known for Bombieri-Vinogradov theorem. Actually, very similar results hold with smaller values. As an experimental justification of this, we perform the following computations.

A strong form of Dirichlet's theorem due to de la Vallée Poussin states that asymptotically, the proportion of primes in the arithmetic progression $\{2kp+1:1\le k\le p/2\}$ is $\oh{\frac{1}{\ln(p)}}$. For each prime number $p$ of bitsize between $10$ and $20$ (there are $81\,928$ of them), we estimate this proportion. For bitsizes $10$ to $14$, we actually compute the proportion exactly, testing the primality for each value $k$. For larger bitsizes, we estimate the proportion by sampling: We sample $N\ge 1000$ random elements in the set and test their primality.

The \textsc{SageMath} code used for the computations is given in Listing~\ref{listing}. 
\Cref{tab:data} provides, for each bitsize, the smallest and largest proportions found, as well as the average proportion. \Cref{fig:plot} plots the proportion for each prime.

\lstinputlisting[float,caption=Code for the proportion of primes in arithmetic progressions,label=listing]{CodeProofs.sage}

\begin{table}[p]
\caption{Proportion of primes in $\{2kp+1:1\le k\le p/2\}$.}
\label{tab:data}
\centering
\begin{tabular}{|c|c|c|c|c|}
\hline
Bitsize & Minimum & Average & Maximal & Theory\\\hline
10 & 13.36\% & 15.63\% & 18.49\% & 16.03\% \\
11 & 12.50\% & 14.12\% & 15.80\% & 14.43\% \\
12 & 11.00\% & 12.79\% & 14.33\% & 13.12\% \\
13 & 10.13\% & 11.79\% & 13.55\% & 12.02\% \\
14 & 9.50\% & 10.93\% & 12.54\% & 11.10\% \\
15 & 8.83\% & 10.14\% & 11.65\% & 10.30\% \\
16 & 8.24\% & 9.46\% & 10.74\% & 9.62\% \\
17 & 7.70\% & 8.89\% & 10.00\% & 9.02\% \\
18 & 7.19\% & 8.36\% & 9.41\% & 8.49\% \\
19 & 6.80\% & 7.91\% & 9.11\% & 8.01\% \\
20 & 6.53\% & 7.49\% & 8.52\% & 7.59\% \\
\hline
\end{tabular}
\end{table}

\begin{figure}[p]
\centering
\caption{Proportion of primes in $\{2kp+1:1\le k\le p/2\}$.}
\label{fig:plot}
\includegraphics[width=\textwidth]{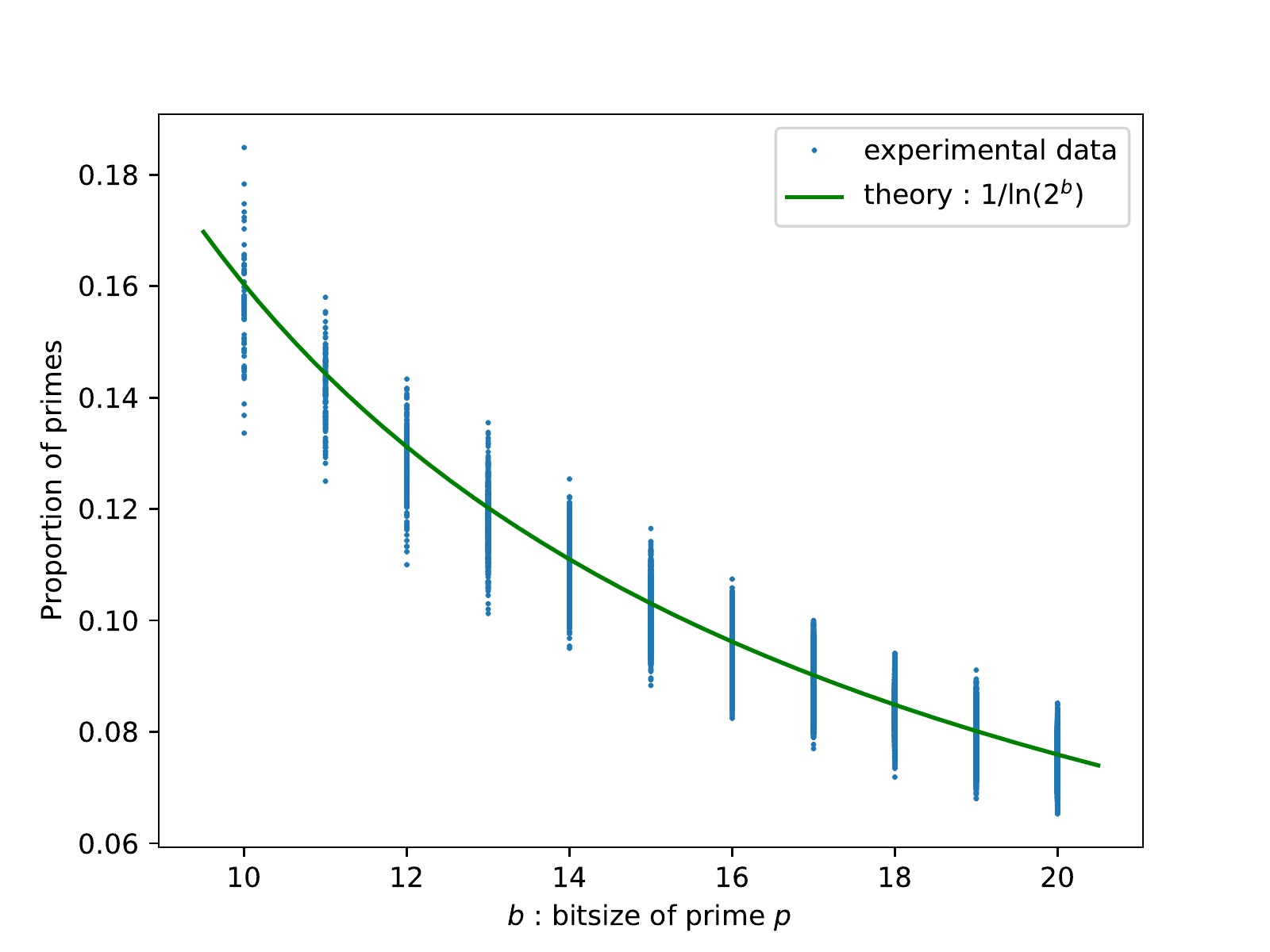}
\end{figure}

\bibliography{bib}

\end{document}